\documentclass[journal,letterpaper]{IEEEtran}

\usepackage[utf8]{inputenc} 
\usepackage[T1]{fontenc}
\usepackage{url}
\usepackage{ifthen}
\usepackage{cite}
\usepackage[cmex10]{amsmath} 
                             
\usepackage[table]{xcolor}
\usepackage{amssymb,amsthm}
\usepackage{subcaption}
\usepackage{tikz}
\usepackage{hyperref}
\usepackage[normalem]{ulem}
\usepackage[noend]{algpseudocode}
\usepackage{algorithm}
\usepackage{tabularx,environ}
\usetikzlibrary{backgrounds,calc,shapes,arrows}

\DeclareMathOperator{\Span}{span}

\DeclareMathOperator{\diag}{diag}

\newtheorem{thm}{Theorem}
\newtheorem{cor}{Corollary}

\begin{document}
	
\title{From Priors to Projections: Geometry and simplified MIMO demodulation of probabilistic shaping}

 \author{%
	\IEEEauthorblockN{Kirill Ivanov, Wei Yang, Mahmoud Taherzadeh Boroujeni, Jing Jiang}\\
	\IEEEauthorblockA{Qualcomm Technologies, Inc., San Diego, CA, 92121\\
		Email: \{kivanov, weiyang, mtaherza, jingj\}@qti.qualcomm.com}
}

\maketitle

\begin{abstract}
	Probabilistic shaping (PS) is a well-known method to achieve improved performance upon a regular quadrature amplitude modulation (QAM) by taking the target constellation and making the distribution of underlying points non-uniform. It has been extensively studied over the years for the additive white Gaussian noise (AWGN) and Rayleigh fading channels. However, the potential of probabilistic shaping in the multiple-input and multiple-output (MIMO) setting needs further investigations.
	
	 In this paper, we prove that if shaped symbols follow Maxwell-Boltzmann distribution, the optimal maximum a posterior (MAP) detection is equivalent to the case of uniform constellation with a simple preprocessing step. Our approach has multiple benefits. It enables to utilize the same processing chain for both shaped and unshaped system, which simplifies the receiver architecture. This technique can be applied for both linear MMSE and nonlinear (near-)MAP demapper types. In addition, the complexity of adaptive methods such as sphere decoding can be reduced.
\end{abstract}

\section{Introduction}
\IEEEPARstart{A}{s} the cellular communications evolve, every new generation is followed by the substantial increase in the peak data rate. The crucial components required for achieving high-throughput transmissions are the use of multiple transmit and receive antennas and the higher-order modulation. Typically, $2^{2M}$-QAM is utilized, where the constellation points are arranged in a quadratic grid, which allows to improve spectral efficiency with low demodulation complexity. However, the capacity-achieving distribution for channels with additive Gaussian noise is also Gaussian, and regular QAM constellations have an asymptotic 1.53dB gap to Shannon capacity in AWGN channels \cite{forney1989multidimensional}.

Constellation shaping is a method to bridge the capacity gap by modifying the constellation to better match the capacity-achieving distribution of the channel. Probabilistic shaping (PS) is a practical method that modifies the probabilities of the underlying constellation so that the channel inputs get closer to the optimal distribution. In particular, Maxwell-Boltzmann (MB) distribution is a practical choice for the wireless transmission channels \cite{kschischang1993optimal}. Various versions of PS have been proposed over the years, including schemes based on turbo \cite{raphaeli2004constellation} and LDPC \cite{kaimalettu2007constellation} codes.  Note that the receiver implementation can be kept relatively simple because the underlying constellation remains unchanged. On the other hand, geometric shaping (GS) rearranges the constellation points to target the similar performance gains. This method is utilized in ATSC 3.0 \cite{ATSC-A322} and DVB \cite{DVB-NGH2013} standards, albeit the lack of lattice structure in the constellation makes demodulation more complex.

A probabilistic shaping scheme proposed in \cite{bocherer2015bandwidth} is based on a distribution matcher (DM) that converts uniform data bits to modulation symbols with the target distribution together with a systematic error-correcting code and is shown to have good performance over a practical coded modulation system. PS (a.k.a., probabilistic amplitude shaping (PAS) in \cite{bocherer2015bandwidth}) is capable of achieving near-optimal shaping gains using constant composition distribution matcher (CCDM) \cite{schulte2016constant} and low-density parity-check (LDPC) codes at moderate to long block lengths even in case of the bit-interleaved coding modulation (BICM) \cite{caire1998bicm}, when the low-complexity receiver performs an independent demodulation of bits in the constellation symbol. Since then, there has been an extensive research on more efficient DM algorithms that have advantage over CCDM, especially for the small DM block length. Notable examples include multiset-partition distribution matching (MPDM)\cite{fehenberger2019multiset}, enumerative sphere shaping (ESS)\cite{gultekin2020enumerative} and energy-based shaping \cite{liu2023energy}. These methods typically have a larger storage or computational complexity cost but provide significant performance improvements in small to moderate block length regime, which makes them more suitable for the practical implementations.

Multi-antenna techniques already found their way to the modern telecommunication standards, such as 5G \cite{3gpp.38.211} and Wi-Fi \cite{IEEE80211n} networks. It has been long known that MIMO communication systems can benefit from the increased capacity and diversity via spatial multiplexing of several data streams. When the number of spatial layers used for transmission grows large, linear methods such as minimum mean square error (MMSE) or zero-forcing (ZF) become extremely suboptimal \cite{yang2015fifty}. Over the last several decades, many reduced-complexity maximum likelihood (ML) demappers have been proposed. The state-of-the-art solution is to utilize sphere decoder \cite{pohst1981computation,schnorr1991lattice,jalden2005complexity,studer2008soft} with sorted QR decomposition \cite{wubben2003mmse}. Additional complexity and performance gains can be obtained by lattice reduction \cite{yao2002lattice,taherzadeh2007lll, gan2009complex,gestner2010lattice}.

Probabilistic shaping is yet to be adopted in the modern telecommunication standards. So far the topic of PS in MIMO channels attracted little focus from academia, which was mostly focused on the case of AWGN channel and single-layer transmissions. Most publications on the topic use linear MMSE detector \cite{kang2022probabilistic,bobrov2023probability,hu2024supporting} and the demonstrated gains are less than 1dB. In our previous work \cite{ivanov2025probabilistic}, we demonstrated that PS can achieve performance gains larger than the 1.53dB AWGN upper bound with a nonlinear demapper such as sphere decoder and conjectured that this is due to the interference shaping.

An important consideration on the way to adoption of a new transmission scheme is the receiver complexity. In case of AWGN channel, soft-output demodulation can be easily performed with a simple rescaling \cite{jia2024simplified}, thus making the overall receiver complexity similar to uniform QAM. In this paper, we demonstrate that the same holds for the general case of MIMO channels. We investigate a MAP detection problem in the system with probabilistic shaping and prove that if the constellation symbols follow Maxwell-Boltzmann distribution, the detection problem can be reduced to the case of uniform distribution using a simple matrix preprocessing. Our method fits naturally into QR-based detection methods and enables the unified receiver flow for both uniform and shaped constellation, which simplifies the implementation and makes probabilistic shaping a good candidate for the future telecommunication systems. The same formulation can also be used with linear methods such as MMSE. In addition, the proposed method reduces the complexity of channel-adaptive methods such as sphere decoder.

\section{Background}

\subsection{Probabilistic amplitude shaping}
Consider $2^M$-PAM constellation 
$$
\mathcal O_R=\{\pm 1,\pm 3,\dots,\pm (2^M-1)\}.
$$ 
It is well-known that for AWGN channel the capacity-achieving input distribution is Gaussian. For finite constellations such as $\mathcal O_R$, capacity can be approached if the points are selected for transmission according to the Maxwell-Boltzmann distribution \cite{kschischang1993optimal}
\begin{equation}
\label{eq:MB}
P_{\nu}(x)=Ze^{-\nu \|x\|^2}, x\in \mathcal O_R,
\end{equation}
where $Z$ is the normalization factor selected to ensure that $\sum_xP_{\nu}(x)=1$ and the parameter $\nu$ determines the source entropy.

Consider now an $(N, K)$ error-correcting code of rate $K/N=(M-1)/M$ with a systematic generator matrix
$$
\mathbf G=\begin{pmatrix}
	\mathbf I_k & \mathbf P
\end{pmatrix}.
$$
The encoding operation produces a length-$N$ codeword $\mathbf c=\begin{pmatrix}
	\mathbf d_K & \mathbf p_{N-K}
\end{pmatrix}$, where the first $K$ bits are equal to the data bits and hence have an identical distribution, whereas the remaining $N-K$ bits can be considered uniform \cite{bocherer2015bandwidth}. Since $P_{\nu}(x)=P_{\nu}(-x)$, we can map the data bits to the amplitudes of constellation symbols and the parity bits to the signs, i.e. $\mathbf d_K$ is decomposed into $K/(M-1)$ groups $A_i$ each containing $M-1$ bits, $\mathbf p_{N-K}$ is decomposed into $K/(M-1)$ singletons $P_i$ and we map the pair $(A_i,P_i)$ to the elements of $\mathcal O_R$ using Gray labeling. Note that the generalization to $2^{2M}$-QAM, defined as the direct product of two $2^{M}$-PAM constellations, is straightforward. In case when the code rate is greater than $(M-1)/M$, some information bits are left unshaped and mapped to the sign bits. The only remaining ingredient is how to generate the vector $\mathbf d_K$ so that the empiric distribution of amplitudes $A_i$ is close to \eqref{eq:MB}.

In this paper, we consider constant composition distribution matcher (CCDM) \cite{schulte2016constant}. It takes $K_c$ uniform data bits as an input and outputs $N_c$ amplitudes $A_i$ that follow the distribution $P_{\nu}$. For an amplitude sequence $A^{N_c}$, define $\mathcal T_x^{N_c}$ as the set of all permutations of $A^{N_c}$. The elements of $\mathcal T_x^{N_c}$ have the same empirical distribution $P(x)=\frac{|i: A_i=x|}{N_c}$, which leads to the following procedure\cite{bocherer2015bandwidth}:
\begin{enumerate}
	\item Compute $N_x=\left[N_c\cdot P_{\nu}(x)\right], x\in \{1,3,\dots,(2^M-1)\}$
	\item Select a sequence $A^{N_c}$ with an empirical distribution $P(x)=N_x/N_c$ and construct the set $\mathcal T_x^{N_c}$
	\item Compute the input length 
	$$
	K_c=\left\lfloor\log_2\frac{N_c!}{N_1!N_3!\dots N_{2^M-1}!}\right\rfloor
	$$
	\item Pick the set $\mathcal C_{CCDM}$ of $2^{K_c}$ sequences from $\mathcal T_x^{N_c}$ and define a bijective mapping $\{0,1\}^{K_c}\to \mathcal C_{CCDM}$.
\end{enumerate}
The mapping proposed in \cite{schulte2016constant} is based on the arithmetic coding and generalizes the idea from \cite{ramabadran1990coding} to the non-binary case. Note that the ideas presented in the main section apply to any DM that outputs symbols with distribution close to MB, e.g. MPDM or energy-based shaping methods.

\subsection{MIMO detection}
In this and the next sections, we mostly follow the notations from \cite{studer2008soft} that are adjusted to account for potentially non-uniform constellation symbol probabilities. 

Consider the complex-valued set $\mathcal O$ of constellation points. 
A $M_t\times M_r$ MIMO system with $M_t$ transmit antennas and $M_r$ receive antennas can be characterized using the input-output relation 
\begin{equation}
\label{eq:mimo}
\mathbf y=\mathbf H\mathbf s + \mathbf n,
\end{equation}
where $\mathbf H\in \mathbb C^{M_t\times M_r}, \mathbf n\sim\mathcal{C}\mathcal N(0,\sigma \mathbf I_{M_r})$ and $\mathbf s\in \mathcal O^{M_t}$. For simplicity, in what follows we assume the unit noise power. 

Assume that the receiver has perfect knowledge of the matrix $\mathbf H$. The maximum a posteriori (MAP) estimate of the transmitted symbol sequence in system \eqref{eq:mimo} is given by 
\begin{align}
	\label{eq:map}
	\mathbf s^{MAP}&=\arg \max_{\mathbf s \in  \mathcal O^{M_t}}P(\mathbf s | \mathbf y)\\
	&=\arg \min_{\mathbf s \in  \mathcal O^{M_t}}\left(\|\mathbf y-\mathbf H\mathbf s\|^2 - \log P(\mathbf s)\right).
\end{align}

In practical MIMO systems, coded bits are mapped to the constellation symbols by assigning distinct $B$-bit labels to each constellation symbol. Let $s_{j,b}$ denote $b$-th label bit in the transmitted symbol $s_j$. In a coded MIMO system, soft-output demapper needs to compute the log-likelihood ratios (LLRs) 

$$
L_{j,b}=\log \frac{P(s_{j,b}=0|\mathbf y)}{P(s_{j,b}=1|\mathbf y)},0\le b < B, 0\le j < M_t,
$$

which are subsequently passed to the channel decoder. Max-log-MAP LLRs can be computed as
\begin{align}
	\label{eq:mlm}
	L_{j,b}(\mathbf y)=&\min_{\mathbf s\in \mathcal{X}_{j,b}^{(0)}}\left( \|\mathbf y-\mathbf H\mathbf s\|^2 - \log P(\mathbf s)\right)\\-&\min_{\mathbf s\in \mathcal{X}_{j,b}^{(1)}}\left( \|\mathbf y-\mathbf H\mathbf s\|^2 - \log P(\mathbf s)\right)\notag ,
\end{align}
where $\mathcal{X}_{j,b}^{(i)}$ are the subsets of $\mathcal O^{M_t}$ that correspond to $s_{j,b}=i$.

\subsection{Sphere decoder}
\label{ss:sd}

MIMO detection problem can be transformed into a tree search problem using the QR decomposition of the channel matrix $\mathbf H $ \cite{pohst1981computation,schnorr1991lattice,jalden2005complexity}. We obtain $\mathbf H=\mathbf Q \mathbf R$, where $M_r\times M_t$ matrix $\mathbf Q$ is unitary and $M_t\times M_t$ upper-triangular matrix $\mathbf R$ has real-valued positive entries on its diagonal. Multiplying both sides of \eqref{eq:mimo} by $\mathbf Q^H$ leads to the alternative formulation 
\begin{align}
	\label{eq:mlm2}
L_{j,b}(\mathbf{\tilde y})=&\min_{\mathbf s\in \mathcal{X}_{j,b}^{(0)}}\left( \|\mathbf{\tilde y}-\mathbf R\mathbf s\|^2 - \log P(\mathbf s)\right)\\-&\min_{\mathbf s\in \mathcal{X}_{j,b}^{(1)}}\left( \|\mathbf{\tilde y}-\mathbf R\mathbf s\|^2 - \log P(\mathbf s)\right)\notag ,
\end{align}
where $\mathbf{\tilde y}=\mathbf Q^H\mathbf y$. Using the upper-triangular structure of matrix $\mathbf R$, for $i=M_t-1,\dots,0$ we define the partial distances
\begin{align}
	\label{eq:ped}
	d_i(\mathbf s)&=d_{i+1}(\mathbf s)+e_i(\mathbf s)-\log P(s_j|s_{j+1},\dots,s_{M_t-1}),\\
	e_i(\mathbf s)&=\left\|\tilde y_i-\sum_{j=i}^{M_t-1}R_{i,j}s_j\right\|^2\notag,
\end{align}
that can be computed recursively. Note that while the efficient enumeration of constellation points is in general a highly nontrivial task due to the term $\log P(s_j|\cdot)$, in case of QAM constellation with uniformly distributed points its lattice structure enables low-complexity closest-point search, as well as the further enumeration \cite{pohst1981computation, schnorr1991lattice}.

Now consider a tree with $M_t$ layers, where the nodes are elements of $\mathcal O$, the branches correspond to the values $d_i$ and each path from the root to a leaf corresponds to a symbol vector $\mathbf s\in\mathcal O^{M_t}$. The MAP solution $\mathbf s_{MAP}$ is the path with the smallest metric $d_0$. We start from the infinite search radius $r$ and traverse the tree depth-first, updating $r$ whenever a leaf is reached and pruning the branches with $d_i>r$.

The generation of LLRs using \eqref{eq:mlm2} can be decomposed into two steps. The first step is to find the MAP solution 
\begin{equation}
\label{eq:sdmap}
\mathbf s^{MAP}=\arg \min_{\mathbf s \in  \mathcal O^{M_t}}\left(\|\mathbf{\tilde y}-\mathbf R\mathbf s\|^2 - \log P(\mathbf s)\right).
\end{equation}
Assume that $s^{MAP}_{j,b}=i$. The second step is to find the best counter-hypothesis
\begin{equation}
\label{eq:counter}
\mathbf s^{\overline{MAP}}_{j,b}=\arg \min_{\mathbf s\in \mathcal{X}_{j,b}^{(\overline i)}} \left(\|\mathbf{\tilde y}-\mathbf R\mathbf s\|^2 - \log P(\mathbf s)\right),
\end{equation}
i.e. to find the most likely transmitted vector $\mathbf s\in\mathcal O^{M_t}$ s.t. $s_{j,b}\neq i$. We use the repeated tree search (RTS) strategy with modified detection order\cite{wang2004approaching,marsh2005smart} and run the tree search $2M\cdot M_t + 1$ times:
\begin{enumerate}
	\item During the first run, we find the MAP solution $\mathbf s^{MAP}$ and store it.
	\item For each symbol $s_j,j=0,\dots,M_t-1$, we rearrange the columns of $\mathbf H$ so that the detection process starts from layer $j$. Then for each bit $b$ the search starts from the reduced candidate set $\mathcal{X}_{j,b}^{(\overline i)}$ at the first layer and uses the full candidate set $\mathcal O$ for all subsequent layers, which makes the tree pruning efficient and guarantees that only valid counter-hypotheses are considered.
\end{enumerate}
We sort the columns of $\mathbf H$ before each QR decomposition so that the stronger layers are closer to the tree root \cite{wubben2003mmse}.

Another substantial complexity reduction is achieved by LLR clipping \cite{studer2008soft}. Since $\left|L_{j,b}\right|=\left|d(\mathbf s^{MAP})-d(\mathbf s^{\overline{MAP}}_{j,b})\right|$, after finding $\mathbf s^{MAP}$ we can simply initialize the search radius for the subsequent runs as $$r\gets d(\mathbf s^{MAP})+\lambda,$$ where $\lambda$ is the clipping value.

\subsection{MMSE detection}
\label{ss:mmse}
A minimum mean square error (MMSE) estimate is given by 
$$
\mathbf s^*=\arg \min_{\mathbf s} \mathbb{E}[\|\mathbf y-\mathbf H\mathbf s\|^2].
$$
A linear MMSE estimate $\hat{\mathbf s}=\mathbf W\mathbf y$ can be obtained by applying the filter matrix $\mathbf W\in \mathbb C^{M_t\times M_r}$ which minimizes $\mathbb E [\|\mathbf W\mathbf y-\mathbf s \|^2]$ and can be computed as 
\begin{equation}
	\label{eq:mmse}
	\mathbf W = \left(\mathbf H^H\mathbf H + \mathbf I\right)^{-1}\mathbf H^H.
\end{equation}

When $\mathbf s$ and $\mathbf y$ are jointly Gaussian, the two estimates coincide, i.e. $\mathbf s^*=\hat{\mathbf s}$. However, this property no longer holds for the case of discrete $\mathbf s$, except for the low-SNR regime. Despite that, linear MMSE receiver is widely used in the real-world MIMO systems due to its simplicity and has been extensively studied in academia \cite{kim2008performance, mckay2010achievable, yang2015fifty}. The corresponding LLR values can be obtained as

\begin{align}
	\label{eq:mmsemlm}
	L_{j,b}(\mathbf y)=&\min_{s_j\in \mathcal{X}_{b}^{(0)}}\left( \|\hat{s}_j-s_j\|^2 - \log P(s_j)\right)\\-&\min_{s_j\in \mathcal{X}_{b}^{(1)}}\left( \|\hat{s}_j-s_j\|^2 - \log P(s_j)\right)\notag ,
\end{align}
where $\mathcal{X}_{b}^{(i)}$ play similar role as $\mathcal{X}_{j,b}^{(i)}$ in \eqref{eq:mlm} but only contain the elements of currently demapped layer.

\section{Simplified PS demapper}

\subsection{Shaping gain over uniform QAM}
Assume that the set $\mathcal O$ corresponds to $2^{2M}$-QAM constellation. By changing the symbol distribution from uniform to Maxwell-Boltzmann, the transmitter obtains a power gain
$$
\mathcal P_{\nu}=\frac{\mathbb E_{P_{u}}[\|\mathbf s\|^2]}{\mathbb  E_{P_{\nu}}[\|\mathbf s\|^2]},
$$
where $\mathbb E_{P}$ denotes the expectation wrt distribution $P$ and $P_u$ is the uniform distribution. Hence, assuming the formulation \eqref{eq:mimo} for the unshaped MIMO setup, its shaped counterpart can be written as 

\begin{equation}
	\label{eq:shapedmimo}
	\mathbf y=\tilde{\mathbf H}\mathbf s + \mathbf n,
\end{equation}
where $\tilde{\mathbf H}=\mathbf H \mathbf D$ and $\mathbf D = \diag(\sqrt{\mathcal P_{\nu_0}},\dots,\sqrt{\mathcal P_{\nu_{M_t-1}}})$. With a slight abuse of notation, we assume $\mathbf H=\mathbf H \mathbf D$ for the rest of the paper.

\subsection{Simplified demapper for probabilistic shaping}

\label{s:mimopas}
Consider now the MIMO system where we apply probabilistic shaping independently at each layer according to MB distribution \eqref{eq:MB}, i.e. $P(s_0,\dots,s_{M_t-1})=\prod_jP(s_j)$ and $P(s_j)=P_{\nu_j}(s_j)$, and define $\mathbf V=\diag(\nu_0,\dots,\nu_{M_t-1})$.

\begin{thm}[PS detection as augmented QR]
	\label{thm:projmap}
	For a shaped MIMO system with MB prior, its MAP estimate is given by
	\begin{equation}
		\label{eq:projmap}
		\mathbf s^{MAP}=\arg \min_{\mathbf s \in  \mathcal O^{M_t}}\|\mathbf Q_{V}^H \mathbf y-\mathbf R_{ V}\mathbf s\|^2,
	\end{equation}
	where $\mathbf R_{V}^H\mathbf R_{V}=\mathbf H^H\mathbf H+\mathbf V$ and $\mathbf Q_{V}\mathbf R_{V}=\mathbf H$.
\end{thm}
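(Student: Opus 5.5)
The plan is a direct completion-of-squares argument. I will show that the MAP objective in \eqref{eq:map} and the objective in \eqref{eq:projmap} differ only by terms that do not depend on $\mathbf s$, so they have the same minimizer over $\mathcal O^{M_t}$.

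First I establish that $\mathbf R_V$ and $\mathbf Q_V$ exist. For $\nu_j>0$ the matrix $\mathbf H^H\mathbf H+\mathbf V$ is Hermitian positive definite. Its Cholesky factorization therefore gives an upper-triangular $\mathbf R_V$ with positive real diagonal and $\mathbf R_V^H\mathbf R_V=\mathbf H^H\mathbf H+\mathbf V$. This $\mathbf R_V$ is invertible, so I set $\mathbf Q_V=\mathbf H\mathbf R_V^{-1}$, which gives $\mathbf Q_V\mathbf R_V=\mathbf H$. If some $\nu_j=0$, I need $\mathbf H$ of full column rank, as in the ordinary QR case.

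It helps to note an equivalent construction. Take the thin QR decomposition of the augmented matrix $\begin{pmatrix}\mathbf H\\ \mathbf V^{1/2}\end{pmatrix}$. Its triangular factor is exactly $\mathbf R_V$, and $\mathbf Q_V$ is the upper $M_r\times M_t$ block of the orthonormal factor. This explains the name ``augmented QR''. It also shows that $\mathbf Q_V$ is in general \emph{not} unitary, so I must not use $\|\mathbf Q_V^H\mathbf y\|=\|\mathbf y\|$ anywhere.

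Next I expand the MAP metric using the MB prior with independent layers. This gives $-\log P(\mathbf s)=\sum_j\nu_j|s_j|^2-\sum_j\log Z_j=\mathbf s^H\mathbf V\mathbf s+c_1$. Hence
\begin{equation*}
\|\mathbf y-\mathbf H\mathbf s\|^2-\log P(\mathbf s)=\|\mathbf y\|^2-2\,\mathrm{Re}\left(\mathbf s^H\mathbf H^H\mathbf y\right)+\mathbf s^H\left(\mathbf H^H\mathbf H+\mathbf V\right)\mathbf s+c_1 .
\end{equation*}
Then I expand the target metric:
\begin{equation*}
\|\mathbf Q_V^H\mathbf y-\mathbf R_V\mathbf s\|^2=\|\mathbf Q_V^H\mathbf y\|^2-2\,\mathrm{Re}\left(\mathbf s^H\mathbf R_V^H\mathbf Q_V^H\mathbf y\right)+\mathbf s^H\mathbf R_V^H\mathbf R_V\mathbf s .
\end{equation*}

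The key identity is $\mathbf R_V^H\mathbf Q_V^H=(\mathbf Q_V\mathbf R_V)^H=\mathbf H^H$. With it the linear terms agree, and by the definition of $\mathbf R_V$ the quadratic terms agree. The two metrics therefore differ by $\|\mathbf y\|^2-\|\mathbf Q_V^H\mathbf y\|^2+c_1$, which is independent of $\mathbf s$. Their $\arg\min$ over $\mathcal O^{M_t}$ thus coincide, which proves \eqref{eq:projmap}.

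Because the discrepancy is a single additive constant, the same identity holds restricted to any subset $\mathcal X_{j,b}^{(i)}$. So the max-log LLRs \eqref{eq:mlm} follow as well, since the constants cancel in the difference.

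The only delicate step is the non-unitarity of $\mathbf Q_V$. One must keep track that the constant $\|\mathbf y\|^2-\|\mathbf Q_V^H\mathbf y\|^2$ is nonzero, but it is harmless because it does not depend on $\mathbf s$. The unit-noise-power convention is used throughout; for general $\sigma$ one would replace $\mathbf V$ by $\sigma\mathbf V$.
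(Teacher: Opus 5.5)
Your proof is correct and is essentially the paper's argument: expand the MB-prior MAP metric, use $\mathbf R_V^H\mathbf Q_V^H=\mathbf H^H$ for the cross terms and $\mathbf R_V^H\mathbf R_V=\mathbf H^H\mathbf H+\mathbf V$ for the quadratic term, and note that the leftover $\|\mathbf y\|^2-\|\mathbf Q_V^H\mathbf y\|^2$ (plus the normalization constants) does not depend on $\mathbf s$. Your extra remarks are also correct and make explicit what the paper leaves implicit: existence of $\mathbf R_V,\mathbf Q_V$ via Cholesky or the augmented QR, the non-unitarity of $\mathbf Q_V$, and the carry-over to the max-log LLRs on each $\mathcal X_{j,b}^{(i)}$.
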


\begin{proof}
Take the right hand side of \eqref{eq:map} and plug in $P(s_j)\propto e^{-\nu_j \|s_j\|^2}$. We get 
$$
	\mathbf s^{MAP}=\arg \min_{\mathbf s} \left(\|\mathbf y-\mathbf H\mathbf s\|^2+\mathbf s^H\mathbf V\mathbf s\right).
$$
Expanding the $\|\|^2$ gives 
\begin{align*}
&\|\mathbf y-\mathbf H\mathbf s\|^2+\mathbf s^H\mathbf V\mathbf s\\
&=\|\mathbf y\|^2-\mathbf y^H\mathbf H\mathbf s-\mathbf s^H\mathbf H^H\mathbf y+\mathbf s^H\mathbf H^H\mathbf H\mathbf s+\mathbf s^H\mathbf V\mathbf s\\
&=\|\mathbf y\|^2-\mathbf y^H\mathbf H\mathbf s-\mathbf s^H\mathbf H^H\mathbf y+\mathbf s^H\left(\mathbf H^H\mathbf H+\mathbf V\right)\mathbf s\\
&=\|\mathbf y\|^2-\mathbf y^H\mathbf Q_V\mathbf R_V\mathbf s-\mathbf s^H\mathbf R_V^H\mathbf Q_V^H\mathbf y+\mathbf s^H\mathbf R_{V}^H\mathbf R_{V}\mathbf s\\
&=\|\mathbf Q_V^H\mathbf y-\mathbf R_V\mathbf s\|^2-\|\mathbf Q_V^H\mathbf y\|^2+\|\mathbf y\|^2\notag\\
&=\|\mathbf Q_V^H\mathbf y-\mathbf R_V\mathbf s\|^2+f(\mathbf y).
\end{align*}
It remains to notice that $f(\mathbf y)$ is independent of $\mathbf s$ and therefore 
\begin{align*}
	\mathbf s^{MAP}&=\arg \min_{\mathbf s} \left(\|\mathbf Q_V^H\mathbf y-\mathbf R_V\mathbf s\|^2+f(\mathbf y)\right)\\
	&=\arg \min_{\mathbf s}\left(\|\mathbf Q_V^H\mathbf y-\mathbf R_V\mathbf s\|^2\right).
\end{align*}
\end{proof}

Our result shows that instead of directly embedding symbol probabilities $P(s_j)$ into the detection algorithm, the receiver can instead perform preprocessing of the channel matrix by computing 
QR decomposition of the augmented matrix $\mathbf H'=\begin{pmatrix}
	\mathbf H \\ \mathbf V^{1/2}
\end{pmatrix}$ as $\text{QR}(\mathbf H')=\begin{pmatrix} \mathbf Q_0 \\ \mathbf Q_1\end{pmatrix}\mathbf R$ and assume the uniform probability distribution subsequently. It follows that $\mathbf Q_V=\mathbf Q_0$ and $\mathbf R_V=\mathbf R$.

Note that this formulation resembles the commonly used augmented QR method for MIMO detection, where QR decomposition is applied to the matrix $\tilde{\mathbf H}=\begin{pmatrix}
\mathbf H \\ \mathbf I
\end{pmatrix}$ \cite{wubben2003mmse}. Combining this with our approach, we get the augmented channel matrix $\tilde{\mathbf H}_V=\begin{pmatrix}
\mathbf H \\ (\mathbf I + \mathbf V)^{1/2}
\end{pmatrix}$. Observe that for the case of AWGN transmission, the proposed approach reduces to the method described in \cite{jia2024simplified}.

\begin{cor}
For a shaped MIMO system with MB prior, its linear MMSE estimate can be computed as 
\begin{equation}
	\label{eq:projmmse}
	\hat{\mathbf s}=\mathbf W_V\mathbf y,
\end{equation}
where 
$\mathbf W_V=\left(\mathbf H^H\mathbf H+ \mathbf V + \mathbf I\right)^{-1}\mathbf H^H$. 
\end{cor}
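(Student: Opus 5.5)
We need to show that under the MB prior the linear MMSE filter becomes $\mathbf W_V=(\mathbf H^H\mathbf H+\mathbf V+\mathbf I)^{-1}\mathbf H^H$. The plan is to reuse Theorem~\ref{thm:projmap}. It replaces the shaped MAP problem by an unshaped problem with effective channel $\mathbf R_V$ and observation $\tilde{\mathbf y}_V=\mathbf Q_V^H\mathbf y$. We then apply the standard linear MMSE filter \eqref{eq:mmse} to this equivalent unshaped system. Concretely, the objective $\|\mathbf y-\mathbf H\mathbf s\|^2+\mathbf s^H\mathbf V\mathbf s$ equals $\|\tilde{\mathbf y}_V-\mathbf R_V\mathbf s\|^2$ up to a term independent of $\mathbf s$, with uniform prior on $\mathcal O^{M_t}$. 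This matches the uniform setting of Section~\ref{ss:mmse}, where the MMSE filter has the form \eqref{eq:mmse}.

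The first step is to treat $\tilde{\mathbf y}_V=\mathbf R_V\mathbf s+\tilde{\mathbf n}$ as the effective model, with $\mathbf s$ uniform (unit energy after the power normalization $\mathbf D$) and unit-power noise. Under exactly the assumptions used to derive \eqref{eq:mmse}, the filter is
\[
\mathbf W'=(\mathbf R_V^H\mathbf R_V+\mathbf I)^{-1}\mathbf R_V^H ,
\]
giving $\hat{\mathbf s}=\mathbf W'\mathbf Q_V^H\mathbf y$.

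The second step is algebraic. Substitute $\mathbf R_V^H\mathbf R_V=\mathbf H^H\mathbf H+\mathbf V$ from Theorem~\ref{thm:projmap}, and write $\mathbf R_V^H\mathbf Q_V^H=(\mathbf Q_V\mathbf R_V)^H=\mathbf H^H$. Combining these gives $\hat{\mathbf s}=(\mathbf H^H\mathbf H+\mathbf V+\mathbf I)^{-1}\mathbf H^H\mathbf y=\mathbf W_V\mathbf y$, which is \eqref{eq:projmmse}. This step is immediate. It is also consistent with the augmented-matrix remark: stacking $(\mathbf I+\mathbf V)^{1/2}$ under $\mathbf H$ yields Gram matrix $\mathbf H^H\mathbf H+\mathbf V+\mathbf I$.

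The main obstacle is justifying the first step. The effective noise $\tilde{\mathbf n}$ is not literally $\mathbf Q_V^H\mathbf n$, because $\mathbf Q_V$ is not unitary when $\mathbf V\neq 0$. The clean equivalence holds only at the level of the quadratic metric, not of a generative model. Rather than arguing about noise statistics, I would define the linear MMSE estimate in the sense the paper uses: the estimator is obtained by relaxing the metric minimized in \eqref{eq:map} (with the Gaussian regularization $\mathbf s^H\mathbf s$ from the MMSE prior) to continuous $\mathbf s$. The MB prior then contributes the extra quadratic term $\mathbf s^H\mathbf V\mathbf s$. Minimizing
\[
\|\mathbf y-\mathbf H\mathbf s\|^2+\mathbf s^H(\mathbf V+\mathbf I)\mathbf s
\]
over $\mathbf s\in\mathbb C^{M_t}$ by setting the gradient to zero gives $(\mathbf H^H\mathbf H+\mathbf V+\mathbf I)\mathbf s=\mathbf H^H\mathbf y$. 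This derivation bypasses the noise-statistics issue, so I would present it as the actual proof and keep the QR route as the interpretation.
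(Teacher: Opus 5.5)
Your first two steps are exactly the paper's proof. The paper substitutes $\mathbf Q_V^H\mathbf y$ and $\mathbf R_V$ into \eqref{eq:mmse}, simplifies with $\mathbf R_V^H\mathbf R_V=\mathbf H^H\mathbf H+\mathbf V$ and $\mathbf R_V^H\mathbf Q_V^H=\mathbf H^H$, and gives no further justification. Your objection to that substitution is sound, and the paper does not address it. From $\mathbf Q_V=\mathbf H\mathbf R_V^{-1}$ one gets $\mathbf Q_V^H\mathbf Q_V=\mathbf I-\mathbf R_V^{-H}\mathbf V\mathbf R_V^{-1}$. Hence $\mathbf Q_V^H\mathbf y=(\mathbf R_V-\mathbf R_V^{-H}\mathbf V)\mathbf s+\mathbf Q_V^H\mathbf n$ with colored noise, so Theorem~\ref{thm:projmap} transfers the metric but not the signal model.

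Your ridge-type derivation is therefore a genuinely different route. It needs no QR factorization and goes straight to the normal equations. It also states exactly what $\mathbf W_V$ optimizes: $\|\mathbf y-\mathbf H\mathbf s\|^2+\mathbf s^H(\mathbf I+\mathbf V)\mathbf s$ over $\mathbb C^{M_t}$. The paper's route instead delivers the ``reuse the uniform receiver verbatim'' message that motivates the section.

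One correction: the relaxation is not ``the sense the paper uses''. Section~\ref{ss:mmse} defines the linear MMSE filter as the minimizer of $\mathbb E[\|\mathbf W\mathbf y-\mathbf s\|^2]$. For independent zero-mean MB symbols with unit noise, that filter is $(\mathbf H^H\mathbf H+\mathbf C_s^{-1})^{-1}\mathbf H^H$ with $\mathbf C_s=\mathbb E[\mathbf s\mathbf s^H]$, which is not $\mathbf W_V$ in general. Indeed, $(\mathbf C_s^{-1})_{jj}=1/\mathbb E_{P_{\nu_j}}[|s_j|^2]$ stays bounded as $\nu_j\to\infty$, because the prior collapses onto the innermost points, while $1+\nu_j$ does not.

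Your relaxation matches \eqref{eq:mmse} in the uniform case only because there $\mathbf C_s=\mathbf I$, so $\mathbf s^H\mathbf s$ is exactly $\mathbf s^H\mathbf C_s^{-1}\mathbf s$. Adding $\mathbf s^H\mathbf V\mathbf s$ on top makes $\mathbf W_V$ the exact LMMSE filter for a surrogate prior: a unit-variance Gaussian stand-in for uniform QAM, tilted by $e^{-\mathbf s^H\mathbf V\mathbf s}$. It is not the LMMSE filter for the actual shaped symbols. Present it as that definition. The paper's substitution proof establishes no more than this either, which is consistent with its own remark that \eqref{eq:mmsepsmlm} and \eqref{eq:mmsemlm} are not equivalent.
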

\begin{proof}
Replace $\mathbf y$ and $\mathbf H$ in \eqref{eq:mmse} with $\mathbf Q_V^H\mathbf y$ and $\mathbf R_V$ to get 
\begin{align*}
\hat{\mathbf s}=&\left(\mathbf R_V^H\mathbf R_V+\mathbf I\right)^{-1}\mathbf R_V^H\mathbf Q_V^H\mathbf y\\
=&\left(\mathbf H^H\mathbf H+ \mathbf V + \mathbf I\right)^{-1}\mathbf H^H\mathbf y=\mathbf W_V\mathbf y.
\end{align*}
\end{proof}

The proposed framework incorporates the prior information into the filter matrix $\mathbf W_V$, so the corresponding LLRs can be obtained as 
\begin{align}
	\label{eq:mmsepsmlm}
	L_{j,b}(\mathbf y)=&\min_{s_j\in \mathcal{X}_{b}^{(0)}}\left( \|\hat{s}_j-s_j\|^2 \right)\\-&\min_{s_j\in \mathcal{X}_{b}^{(1)}}\left( \|\hat{s}_j-s_j\|^2\right)\notag.
\end{align}
Note that the equations \eqref{eq:mmsepsmlm} and \eqref{eq:mmsemlm} are not equivalent and the LLRs obtained through them are not guaranteed to be equal. However, in practice the generated LLRs are very close and both methods demonstrate comparable performance. As an example, Figure \ref{fig:mmse_pdf} demonstrates the empiric LLR distribution for the shaped amplitude bit in $4\times4$ MIMO scheme assuming shaped 16QAM with $\nu=0.303953$. We assume that the transmission takes place through TDL-A \cite{3gpp.38.901} channel with 30ns delay spread and 11 Hz Doppler frequency at 10dB SNR. We can observe that while the two distributions are not identical, they are very close to each other.
\begin{figure}
	\centering
	\includegraphics[width=\linewidth]{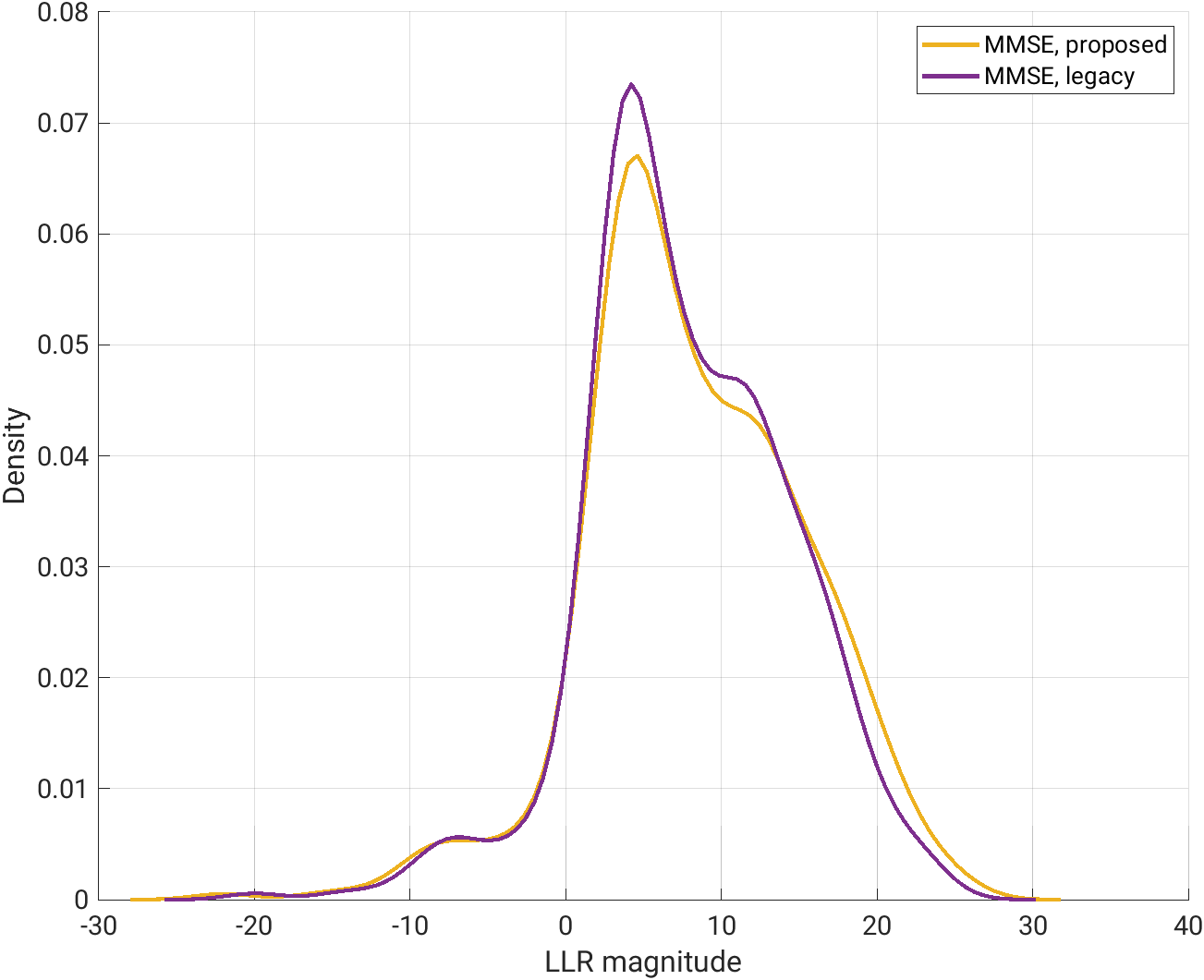}
	\caption{Empiric LLR distribution}
	\label{fig:mmse_pdf}
\end{figure}

The main advantage of the proposed method is the simplified receiver flow. It lets the demapping complexity of shaped constellation become essentially identical to the case of uniform a priori symbol distribution, which in case of QAM enables to fully leverage the underlying lattice structure. It can be considered as a preprocessing step before invocation of the main body of the detector algorithm and hence applies to any receiver type, from linear MMSE to the nonlinear methods such as sphere decoder. The proposed method provably works only for the case of Maxwell-Boltzmann prior and in general is not compatible with an arbitrary distribution. However, our experiments showed that it also works well when the distribution is sufficiently Gaussian-like, which is a good practical choice for the channels with additive Gaussian noise.

\subsection{Geometric interpretation}

The result of Theorem~\ref{thm:projmap} admits an intuitive geometric picture: the prior information $-\log P(\mathbf s)$ can be treated as if it were produced by extra, noiseless observation dimensions appended to the channel, so that MAP detection for probabilistic shaping reduces to an orthogonal projection onto the subspace spanned by this augmented channel, followed by ordinary MIMO (soft) detection.

Consider first a single real PAM layer with amplitude $s\in \mathcal O_R$ and unit channel gain, for which the MAP metric is $(y-s)^2+\nu s^2$. 
This MAP metric can be viewed as a squared Euclidean distance in an auxiliary two-dimensional plane between received signal $y'=(y,0)$ and lifted constellations $\{(s,\sqrt\nu s)$, as shown in Fig.~\ref{fig:geom1d}. 

The key observation is that the lifted constellations $\{(s,\sqrt\nu s):s\in\mathcal O_R\}$ lie entirely on a single line through the origin with slope $\sqrt\nu$ -- precisely the column space of the augmented channel $h'=(1,\sqrt\nu)^T$ underlying \eqref{eq:projmap}. 
In order to compute the distance between $y'$ and $(s,\sqrt\nu s)$, it suffices to project $y'$ onto the line (i.e., 1D column space) spanned by $h'$, and measure the distances therein. The distance between $y'$ and the projected point are common across all the constellation points, and hence can be ignored from the MAP metric. 
Furthermore, the lifted constellations remain to form a uniformly spaced lattice in this 1D space, with local coordinates ${r_\nu s}$, where $r_\nu=\sqrt{1+\nu}$.  
Finally, writing $q=h'/r_\nu$ for the unit vector spanning this line, the scalar coordinate of the projected point along the line can be written as $p=qq^Hy'=y/r_\nu$, where $qq^H$ is the rank-one projector onto $\Span(h')$. As we can see, $r_\nu s$ and $p=y/r_\nu$ corresponds to the two terms on the RHS of~\eqref{eq:projmap}, respectively. This proves Theorem~\ref{thm:projmap} in the scalar case.

\begin{figure}
	\centering
	\begin{tikzpicture}[scale=0.55]
		\draw[gray,->] (-8.5,0) -- (8.7,0);
		\foreach \x/\lbl in {-7/{-7},-5/{-5},-3/{-3},-1/{-1},1/{1},3/{3},5/{5},7/{7}}{
			\filldraw[blue] (\x,0) circle (2.2pt);
			\node[below] at (\x,-0.3) {\scriptsize \lbl};
		}
		\draw[thick,gray] (-8,-4) -- (8,4);
		\foreach \x in {-7,-5,-3,-1,1,3,5,7}{
			\draw[gray!60,thin] (\x,0) -- (\x,{0.5*\x});
		}
		\draw[gray!70,dashed] (-3.5,0) -- (-5,0) -- (-5,-2.5);
		\node[above] at (-4.25,-0.15) {\scriptsize $|y-s|$};
		\node[left] at (-5.15,-1.3) {\scriptsize $\sqrt\nu|s|$};
		\filldraw[green!60!black] (-5,-2.5) circle (2.8pt);
		\node[below] at (-5,-2.9) {\scriptsize $(s,\sqrt\nu s)$};
		\filldraw[red] (-3.5,0) circle (2.8pt);
		\node[above] at (-3.5,0.2) {\scriptsize $y'=(y,0)$};
		\filldraw[orange!80!black] (-2.8,-1.4) circle (2.6pt);
		\node[below right] at (-2.8,-1.4) {\scriptsize $p=qq^Hy'$};
		\draw[green!60!black, very thick] (-3.5,0) -- (-2.8,-1.4);
		\draw[black, thin] (-2.9342,-1.1317) -- (-2.6659,-0.9975) -- (-2.5317,-1.2658);
		\draw[orange!80!black, dashed] (-2.8,-1.4) -- (-5,-2.5);
	\end{tikzpicture}
	\caption{Geometric picture of the SISO MAP metric $(y-s)^2+\nu s^2$}
	\label{fig:geom1d}
\end{figure}

This picture generalizes directly to the multi-layer MIMO case. Each transmit layer contributes one signal dimension through the corresponding column of $\mathbf H$ and one shaping dimension through the corresponding column of $\mathbf V^{1/2}$, so the noiseless augmented received vector $\mathbf H'\mathbf s=\begin{pmatrix}\mathbf H \mathbf s \\ \mathbf V^{1/2}\mathbf s\end{pmatrix}$ traces out a lattice that lives in the subspace spanned by the columns of $\mathbf H'$. 
With the QR decomposition $\text{QR}(\mathbf H')=\begin{pmatrix}\mathbf Q_0\\ \mathbf Q_1\end{pmatrix}\mathbf R$, the matrix $\begin{pmatrix}\mathbf Q_0\\ \mathbf Q_1\end{pmatrix}$ provides an orthonormal basis to this subspace, while $\mathbf R=\mathbf R_V$ generates the lattice points with respect to this basis. 
Left-multiplying the observation $\mathbf y$ by $\mathbf Q_0^H=\mathbf Q_V^H$ is precisely the projection step, producing the sufficient statistic $\mathbf Q_V^H\mathbf y$ used in \eqref{eq:projmap}. Consequently, the receiver does not need to evaluate the log-prior term explicitly: detection is carried out against a lattice that has already been reshaped by the QR step to account for the MB prior, which is why a demodulator designed for uniform QAM can be reused verbatim, with $\mathbf Q_V^H\mathbf y$ and $\mathbf R_V$ simply taking the place of $\mathbf y$ and $\mathbf H$.

\section{Results}
We evaluate the performance of $4\times4$ MIMO system with 5G LDPC code \footnote{Strictly speaking, 5G LDPC code is not systematic, since a subset of the systematic bits are punctured from the codeword \cite{richardson2018design}. In our evaluations, we apply CCDM only on the set of systematic bits that are transmitted (i.e., not punctured) in order to preserve the shaping.} \cite{3gpp.38.212}. We use layered min-sum belief propagation decoder \cite{hocevar2004reduced} with 25 iterations for decoding of LDPC code. Table \ref{tbl:sim} provides the simulation parameters.

\begin{table}[]
	\caption{Simulation assumptions}
	\label{tbl:sim}
		\begin{tabular}{|l|l|}
			\hline
			Channel model            & \begin{tabular}[c]{@{}l@{}}TDL-A\cite{3gpp.38.901}\\ 30ns delay spread\\ 11Hz Doppler frequency\end{tabular}            \\ \hline
			Antennas          & 4Tx, 4Rx                                                                                                 \\ \hline
			OFDM symbols          & 1                                                                         \\ \hline
			Resource blocks          & 24                                                                                               \\ \hline
			Information bits  & 5946                                                                                                     \\ \hline
			Block length        & 9216                                                                                                     \\ \hline
			Coding rate         & 0.85                         \\ \hline
			Shaping        & CCDM with $\nu=0.025$                           \\ \hline
			Modulation          & 256QAM                                                                                                   \\ \hline
		\end{tabular}
	\centering
\end{table}

Figure \ref{fig:sd_perf} demonstrates the PS performance with sphere decoder, comparing the method from Section \ref{s:mimopas} to the state-of-the-art approach utilizing prior symbol probabilities in the tree search, as described in Section \ref{ss:sd}. As expected, there is no performance difference between the two methods. On the other hand, Figure \ref{fig:sd_compl} demonstrates the average number of nodes visited by the sphere decoder in each case. It can be seen that the proposed method brings a substantial reduction in complexity, bringing the average number of nodes down by 3x in the low-SNR region and by 20\% in the high-SNR region.  The observation is consistent with the geometric framework from the previous section: our method performs a change of basis of the channel matrix $\mathbf H$ and after accounting for shaping the new basis becomes more orthogonal, meaning that the interference is suppressed more efficiently and the tree search needs to explore fewer candidates at each level. The advantage is especially notable for the low-SNR region, whereas for high SNR the number of good candidates already drops down substantially, so the room for improvement is much smaller. 

It is worth noting that our method provides stable complexity across the considered SNR range, which is a stark contrast to the legacy sphere decoder behavior, whose visited-node count grows sharply at low SNR, as demonstrated on Figure \ref{fig:sd_compl}. This result demonstrates the efficiency of our approach and suggests a performance benefit for the case of complexity-constrained detectors, thus making it significantly more suitable for a practical implementation.

\begin{figure}
	\centering
	\includegraphics[width=\linewidth]{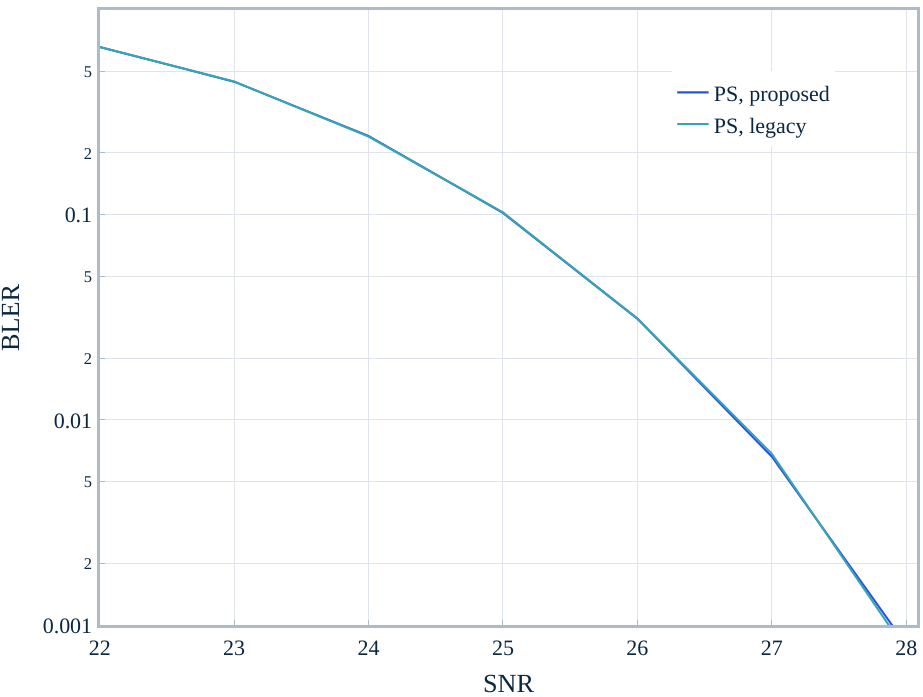}
	\caption{Sphere decoder performance}
	\label{fig:sd_perf}
\end{figure}

\begin{figure}
	\centering
	\includegraphics[width=\linewidth]{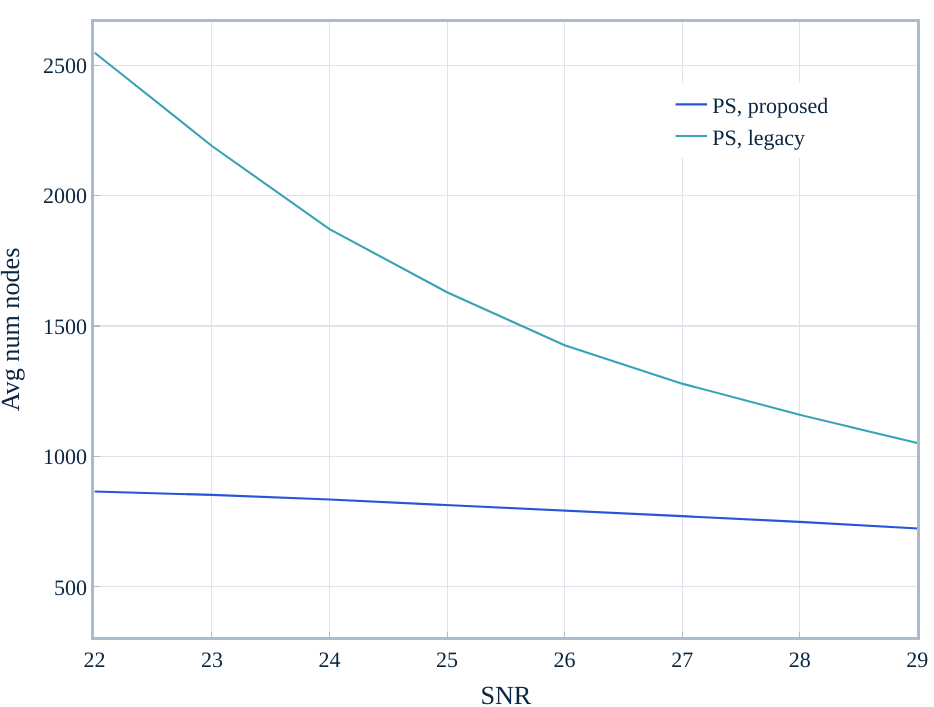}
	\caption{Sphere decoding complexity}
	\label{fig:sd_compl}
\end{figure}

\section{Conclusion}
In this paper, we investigated the receiver design for probabilistic shaping in MIMO channels. Our main result is the new formulation for the MAP detection, which embeds the  prior distribution of constellation symbols into the channel matrix used by the demapper algorithm. This way, after the preprocessing step the main receiver flow remains unchanged, which is beneficial from the practical implementation standpoint. Our result is very timely because the nonlinear detector is crucial for leveraging the interference shaping and achieving the largest gains in coded MIMO systems. Moreover, the new formulation reduces the complexity of adaptive algorithms, which also suggests that it can bring an additional performance gain when paired with a fixed-complexity demapper.

Our framework enables to utilize the same approach for linear MMSE receiver. In this case, the estimate is not exactly equivalent to the state-of-the-art approach but the generated LLRs are very close and consequently the performance remains unchanged. Hence, it enables the unified receiver architecture, which does not depend on the demapper type and remains similar for the case of uniform and shaped QAM constellation.

Currently, theoretical analysis of shaping performance and its limitations in MIMO fading channels has not received significant attention in academia. We hope that the presented formulation opens up the path for a deeper dive into this topic and turns it into a fruitful research direction. Potential extensions of our work include generalization to other symbol distributions, as well as to transmission setups that go beyond typical wireless scenarios.

\enlargethispage{-1.2cm} 

\bibliographystyle{IEEEtran}
\bibliography{biblio}

@techreport{3gpp.38.212,
	author = {3GPP},
	month = {01},
	note = {Version 15.4.0},
	number = {38.212},
	title = {{NR; Multiplexing and channel coding}},
	type = {Technical Specification (TS)},
	year = {2019}
}

@techreport{3gpp.38.211,
	author = {3GPP},
	month = {07},
	note = {Version 15.2.0},
	number = {38.211},
	title = {{NR; Physical channels and modulation}},
	type = {Technical Specification (TS)},
	year = {2018}
}

@techreport{3gpp.38.901,
	author = {3GPP},
	month = {11},
	note = {Version 16.1.0},
	number = {38.901},
	title = {{5G; Study on channel model for frequencies from 0.5 to 100 GHz}},
	type = {Technical Specification (TS)},
	year = {2020}
}

@ARTICLE{caire1998bicm,
	author={Caire, G. and Taricco, G. and Biglieri, E.},
	journal={IEEE Transactions on Information Theory}, 
	title={Bit-interleaved coded modulation}, 
	year={1998},
	volume={44},
	number={3},
	pages={927-946},
	doi={10.1109/18.669123}
}

@ARTICLE{forney1989multidimensional,
	author={Forney, G.D. and Wei, L.-F.},
	journal={IEEE Journal on Selected Areas in Communications}, 
	title={Multidimensional constellations. {I. I}ntroduction, figures of merit, and generalized cross constellations}, 
	year={1989},
	volume={7},
	number={6},
	pages={877-892},
	doi={10.1109/49.29611}
}

@article{kschischang1993optimal,
	title={Optimal nonuniform signaling for {G}aussian channels},
	author={Kschischang, Frank R and Pasupathy, Subbarayan},
	journal={IEEE Transactions on Information Theory},
	volume={39},
	number={3},
	pages={913--929},
	year={1993},
	publisher={IEEE}
}

@ARTICLE{bocherer2015bandwidth,
	author={Böcherer, Georg and Steiner, Fabian and Schulte, Patrick},
	journal={IEEE Transactions on Communications}, 
	title={Bandwidth Efficient and Rate-Matched Low-Density Parity-Check Coded Modulation}, 
	year={2015},
	volume={63},
	number={12},
	pages={4651-4665},
	doi={10.1109/TCOMM.2015.2494016}}

@ARTICLE{schulte2016constant,
	author={Schulte, Patrick and Böcherer, Georg},
	journal={IEEE Transactions on Information Theory}, 
	title={Constant Composition Distribution Matching}, 
	year={2016},
	volume={62},
	number={1},
	pages={430-434},
	doi={10.1109/TIT.2015.2499181}}

@ARTICLE{ramabadran1990coding,
	author={Ramabadran, T.V.},
	journal={IEEE Transactions on Communications}, 
	title={A coding scheme for m-out-of-n codes}, 
	year={1990},
	volume={38},
	number={8},
	pages={1156-1163},
	doi={10.1109/26.58748}}

@INPROCEEDINGS{kaimalettu2007constellation,
	author={Kaimalettu, Sunil and Thangaraj, Andrew and Bloch, Matthieu and McLaughlin, Steven W.},
	booktitle={2007 IEEE International Symposium on Information Theory}, 
	title={Constellation Shaping using {LDPC} Codes}, 
	year={2007},
	volume={},
	number={},
	pages={2366-2370},
	doi={10.1109/ISIT.2007.4557573}}

@ARTICLE{raphaeli2004constellation,
	author={Raphaeli, D. and Gurevitz, A.},
	journal={IEEE Transactions on Communications}, 
	title={Constellation shaping for pragmatic turbo-coded modulation with high spectral efficiency}, 
	year={2004},
	volume={52},
	number={3},
	pages={341-345},
	doi={10.1109/TCOMM.2004.823564}}

@INPROCEEDINGS{wubben2003mmse,
	author={Wubben, D. and Bohnke, R. and Kuhn, V. and Kammeyer, K.-D.},
	booktitle={VTC 2003-Fall}, 
	title={{MMSE} extension of {V-BLAST} based on sorted {QR} decomposition}, 
	volume={1},
	number={},
	pages={508-512 Vol.1},
	doi={10.1109/VETECF.2003.1285069}}

@ARTICLE{gan2009complex,
	author={Gan, Ying Hung and Ling, Cong and Mow, Wai Ho},
	journal={IEEE Transactions on Signal Processing}, 
	title={Complex Lattice Reduction Algorithm for Low-Complexity Full-Diversity {MIMO} Detection}, 
	year={2009},
	volume={57},
	number={7},
	pages={2701-2710},
	doi={10.1109/TSP.2009.2016267}}

@article{gestner2010lattice,
	title={Lattice reduction for MIMO detection: From theoretical analysis to hardware realization},
	author={Gestner, Brian and Zhang, Wei and Ma, Xiaoli and Anderson, David V},
	journal={IEEE Transactions on Circuits and Systems I: Regular Papers},
	volume={58},
	number={4},
	pages={813--826},
	year={2010},
	publisher={IEEE}
}

@article{pohst1981computation,
	author = {Pohst, Michael},
	title = {On the computation of lattice vectors of minimal length, successive minima and reduced bases with applications},
	year = {1981},
	issue_date = {February 1981},
	publisher = {Association for Computing Machinery},
	address = {New York, NY, USA},
	volume = {15},
	number = {1},
	issn = {0163-5824},
	doi = {10.1145/1089242.1089247},
	journal = {SIGSAM Bull.},
	month = feb,
	pages = {37–44},
	numpages = {8}
}

@ARTICLE{jalden2005complexity,
	author={Jalden, J. and Ottersten, B.},
	journal={IEEE Transactions on Signal Processing}, 
	title={On the complexity of sphere decoding in digital communications}, 
	year={2005},
	volume={53},
	number={4},
	pages={1474-1484},
	doi={10.1109/TSP.2005.843746}}

@article{schnorr1991lattice,
	title={Lattice basis reduction: {I}mproved practical algorithms and solving subset sum problems},
	author={Schnorr, C. P. and Euchner, M.},
	journal={Mathematical Programming},
	year={1991},
	volume={66},
	pages={181-199}
}

@INPROCEEDINGS{yao2002lattice,
	author={Huan Yao and Wornell, G.W.},
	booktitle={Global Telecommunications Conference, 2002. GLOBECOM '02. IEEE}, 
	title={Lattice-reduction-aided detectors for {MIMO} communication systems}, 
	year={2002},
	volume={1},
	number={},
	pages={424-428 vol.1},
	doi={10.1109/GLOCOM.2002.1188114}}

@ARTICLE{studer2008soft,
	author={Studer, Christoph and Burg, Andreas and Bolcskei, Helmut},
	journal={IEEE Journal on Selected Areas in Communications}, 
	title={Soft-output sphere decoding: algorithms and {VLSI} implementation}, 
	year={2008},
	volume={26},
	number={2},
	pages={290-300},
	doi={10.1109/JSAC.2008.080206}}

@inproceedings{bobrov2023probability,
	title={On Probability Shaping for {5G} {MIMO} Wireless Channel with Realistic {LDPC} Codes},
	author={Bobrov, Evgeny and Dordzhiev, Adyan},
	booktitle={International Conference on Mathematical Optimization Theory and Operations Research},
	pages={203--217},
	year={2023},
	organization={Springer}
}

@INPROCEEDINGS{kang2022probabilistic,
	author={Kang, Weimin},
	booktitle={2022 IEEE WCNC}, 
	title={A Probabilistic Shaping Scheme for {MIMO} Systems with Signal Space Diversity}, 
	volume={},
	number={},
	pages={251-255},
	doi={10.1109/WCNC51071.2022.9771617}}

@INPROCEEDINGS{marsh2005smart,
	author={Marsch, P. and Zimmermann, E. and Fettweis, G.},
	booktitle={2005 13th European Signal Processing Conference}, 
	title={Smart candidate adding: A new low-complexity approach towards near-capacity {MIMO} detection}, 
	volume={},
	number={},
	pages={1-4},
	doi={}}

@INPROCEEDINGS{wang2004approaching,
	author={Wang, R. and Giannakis, G.B.},
	booktitle={2004 IEEE WCNC}, 
	title={Approaching {MIMO} channel capacity with reduced-complexity soft sphere decoding}, 
	volume={3},
	number={},
	pages={1620-1625 Vol.3},
	doi={10.1109/WCNC.2004.1311795}}

@standard{IEEE80211n,
	title = {IEEE Standard for Information Technology--Telecommunications and Information Exchange between Systems--Local and Metropolitan Area Networks--Specific Requirements--Part 11: Wireless LAN Medium Access Control (MAC) and Physical Layer (PHY) Specifications},
	author = {{Institute of Electrical and Electronics Engineers}},
	year = {2009},
	number = {802.11n},
	publisher = {IEEE}
}

@INPROCEEDINGS{hocevar2004reduced,
	author={Hocevar, D.E.},
	booktitle={2004 SIPS}, 
	title={A reduced complexity decoder architecture via layered decoding of {LDPC} codes}, 
	pages={107-112},
	doi={10.1109/SIPS.2004.1363033}}

@INPROCEEDINGS{liu2023energy,
	author={Liu, Wei and Richardson, Tom and Shental, Ori and Wu, Liangming and Xu, Changlong and Xu, Hao},
	booktitle={GLOBECOM 2023 - 2023 IEEE Global Communications Conference}, 
	title={Energy-Based Arithmetic Coding Methods for Probabilistic Amplitude Shaping}, 
	year={2023},
	volume={},
	number={},
	pages={5299-5304},
	doi={10.1109/GLOBECOM54140.2023.10436969}
}

@ARTICLE{fehenberger2019multiset,
	author={Fehenberger, Tobias and Millar, David S. and Koike-Akino, Toshiaki and Kojima, Keisuke and Parsons, Kieran},
	journal={IEEE Transactions on Communications}, 
	title={Multiset-Partition Distribution Matching}, 
	year={2019},
	volume={67},
	number={3},
	pages={1885-1893},
	doi={10.1109/TCOMM.2018.2881091}
}

@ARTICLE{richardson2018design,
	author={Richardson, Tom and Kudekar, Shrinivas},
	journal={IEEE Communications Magazine}, 
	title={Design of Low-Density Parity Check Codes for {5G} New Radio}, 
	year={2018},
	volume={56},
	number={3},
	pages={28-34},
	doi={10.1109/MCOM.2018.1700839}}

@INPROCEEDINGS{jia2024simplified,
	author={Jia, Yinhua and Wu, Liangming and Xu, Changlong and Liu, Wei and Xu, Hao and Richardson, Tom},
	booktitle={2024 IEEE International Conference on Communications Workshops (ICC Workshops)}, 
	title={A Simplified Soft-demapping Scheme for Probabilistic Constellation Shaping}, 
	year={2024},
	volume={},
	number={},
	pages={1858-1863},
	doi={10.1109/ICCWorkshops59551.2024.10615331}}

@ARTICLE{gultekin2020enumerative,
	author={Gültekin, Yunus Can and J. van Houtum, Wim and Koppelaar, Arie G. C. and Willems, Frans M. J.},
	journal={IEEE Transactions on Wireless Communications}, 
	title={Enumerative Sphere Shaping for Wireless Communications With Short Packets}, 
	year={2020},
	volume={19},
	number={2},
	pages={1098-1112},
	doi={10.1109/TWC.2019.2951139}}

@standard{DVB-NGH2013,
	title        = {Digital Video Broadcasting (DVB); Next Generation broadcasting system to Handheld, physical layer specification (DVB-NGH)},
	year         = {2013},
	month        = {November},
	note         = {{ETSI EN} 303 105-1}
}

@standard{ATSC-A322,
	title        = {Physical Layer Protocol (A/322)},
	year         = {2017},
	month        = {June},
	note         = {{ATSC A}/322:2017}
}

@ARTICLE{yang2015fifty,
	author={Yang, Shaoshi and Hanzo, Lajos},
	journal={IEEE Communications Surveys and Tutorials}, 
	title={Fifty Years of {MIMO} Detection: The Road to Large-Scale {MIMOs}}, 
	year={2015},
	volume={17},
	number={4},
	pages={1941-1988},
	doi={10.1109/COMST.2015.2475242}}

@ARTICLE{mckay2010achievable,
	author={McKay, Matthew R. and Collings, Iain B. and Tulino, Antonia M.},
	journal={IEEE Transactions on Information Theory}, 
	title={Achievable Sum Rate of {MIMO} {MMSE} Receivers: A General Analytic Framework}, 
	year={2010},
	volume={56},
	number={1},
	pages={396-410},
	doi={10.1109/TIT.2009.2034893}}

@ARTICLE{kim2008performance,
	author={Kim, Namshik and Lee, Yusung and Park, Hyuncheol},
	journal={IEEE Transactions on Wireless Communications}, 
	title={Performance Analysis of {MIMO} System with Linear {MMSE} Receiver}, 
	year={2008},
	volume={7},
	number={11},
	pages={4474-4478},
	doi={10.1109/T-WC.2008.070785}}

@ARTICLE{hu2024supporting,
	author={Hu, Sha and Wang, Hao and Semenov, Sergei},
	journal={IEEE Transactions on Wireless Communications}, 
	title={Supporting Probabilistic Constellation Shaping in 5G-NR Evolution}, 
	year={2024},
	volume={23},
	number={4},
	pages={3586-3599},
	doi={10.1109/TWC.2023.3309594}}

@INPROCEEDINGS{ivanov2025probabilistic,
	author={Ivanov, Kirill and Yang, Wei and Jiang, Jing},
	booktitle={2025 13th International Symposium on Topics in Coding (ISTC)}, 
	title={Probabilistic Shaping in {MIMO}: Going Beyond 1.53dB {AWGN} Gain With the Non-Linear Demapper}, 
	year={2025},
	volume={},
	number={},
	pages={1-5},
	doi={10.1109/ISTC65386.2025.11154592}}

@ARTICLE{taherzadeh2007lll,
	author={Taherzadeh, Mahmoud and Mobasher, Amin and Khandani, Amir K.},
	journal={IEEE Transactions on Information Theory}, 
	title={{LLL} Reduction Achieves the Receive Diversity in {MIMO} Decoding}, 
	year={2007},
	volume={53},
	number={12},
	pages={4801-4805},
	doi={10.1109/TIT.2007.909169}}
\end{document}